\documentclass[11pt]{article}
\pdfoutput=1
\usepackage{amsmath,amsfonts,amssymb,diagrams}
\usepackage{graphicx}
\usepackage{fullpage}

\newtheorem{theorem}{Theorem}[section]

\newtheorem{proposition}[theorem]{Proposition}

\newenvironment{proof}[1][Proof]{\begin{trivlist}
\item[\hskip \labelsep {\bfseries #1}]}{\end{trivlist}}
\newenvironment{definition}[1][Definition]{\begin{trivlist}
\item[\hskip \labelsep {\bfseries #1}]}{\end{trivlist}}
\newenvironment{example}[1][Example]{\begin{trivlist}
\item[\hskip \labelsep {\bfseries #1}]}{\end{trivlist}}

\begin{document}
\title{Fuzzy Topological Systems\thanks{This paper was accepted for presentation and it was 
    read at the {\em 8th Panhellenic Logic Symposium}, July 4--8, 2011, Ioannina, Greece.}}
\author{Apostolos Syropoulos\\ 
        Greek Molecular Computing Group\\
        Xanthi, Greece\\
        \texttt{asyropoulos@gmail.com}\\
\and 
       Valeria de Paiva\\
       School of Computer Science\\
       University of Birmingham, UK\\
       \texttt{valeria.depaiva@gmail.com}\\
}
\date{}
\maketitle
\begin{abstract}
Dialectica categories are a very versatile categorical model of linear logic. 
These have been used to model many seemingly different things (e.g.,
Petri nets and Lambek's calculus). In this note, we expand our previous
work on fuzzy petri nets to deal with fuzzy topological systems. One basic idea is 
to use as the dualizing object in the Dialectica categories construction, 
the unit real interval $\mathrm{I}=[0,1]$,  which has all the properties of a 
{\em lineale}. The second basic idea is to generalize Vickers's notion of a 
topological system.
\end{abstract}
\section{Introduction}
Fuzzy set theory and fuzzy logic have been invented by Lotfi ali Asker Zadeh. 
This is a theory that started from a generalization of the set concept and the 
notion of a truth value (for an overview, for example, see~\cite{klir95}). In fuzzy 
set theory, an element of a fuzzy subset belongs to it to a degree, which is usually 
a number between $0$ and $1$. For example, if we have a fuzzy subset of white colors, 
then all the gray-scale colors are white to a certain degree and, thus, belong to t
his set with a degree. The following definition by Zadeh himself explains what  fuzzy logic 
is:\footnote{The definition was posted to the bisc-group mailing list on 22/11/2008.}        
\begin{definition}
Fuzzy logic is a precise system of reasoning, deduction and computation in which 
the objects of discourse and analysis are associated with information which is, or 
is allowed to be, imprecise, uncertain, incomplete, unreliable, partially true or 
partially possible. 
\end{definition}

Categories, which were invented by Samuel Eilenberg and Saunders Mac Lane, form a very 
high-level abstract mathematical theory that unifies all branches of mathematics.
Category theory plays a central role in modern mathematics and 
theoretical computer science, and, in addition,  it is used in mathematical 
physics, in  software engineering, etc. Categories have been used to model and
study logical systems. In particular, the Dialectica categories of de Paiva~\cite{paiva89} 
are categorical model of linear logic~\cite{girard95}. These categories have been used
to model Petri nets~\cite{brown90}, the Lambek Calculus~\cite{paiva90}, 
state in programming~\cite{correaetal96}, and to define fuzzy petri 
nets~\cite{paivasyropoulos11}. Using some of the ideas in our previous work on 
fuzzy petri nets, we wanted to develop the idea of fuzzy topological systems, that is, 
the fuzzy counterpart of Vickers's~\cite{vickers90} topological systems. In this note, 
we present fuzzy topological systems and discuss some of their properties. 

\section{The category $\mathsf{Dial}_{\mathrm{I}}(\mathbf{Set})$}\label{basic:notions}
The Dialectica categories construction (see for example~\cite{paiva06}) can be  
instantiated using any lineale  and the basic category $\mathbf{Set}$.  As discussed 
in~\cite{syropoulos06}, the unit interval, since it is a Heyting algebra, has all the 
properties of a lineale structure. Recall that a lineale is a structure defined as follows:
\begin{definition}
The quintuple $(L, \le, \circ, 1, \multimap)$ is a lineale if:
\begin{itemize}
\item $(L,\le)$ is poset,
\item $\circ:L\times L\rightarrow L$ is an order-preserving multiplication,
      such that $(L,\circ, 1)$ is a symmetric monoidal structure (i.e.,
      for all $a\in L$, $a\circ 1=1\circ a=a$).
\item if for any $a, b\in L$ exists a largest $x\in L$ such that 
      $a\circ x\le b$, then this element is denoted $a\multimap b$ and
      is called the pseudo-complement of $a$ with respect to $b$.
\end{itemize}
\end{definition}
Now, one can prove that the quintuple 
$(\mathrm{I}, \le, \wedge, 1, \Rightarrow)$, where $\mathrm{I}$ is the unit interval, 
$a\wedge b=\min\{a,b\}$, and  $a\Rightarrow b=\bigvee\{c: c\wedge a\le b\}$ 
($a\vee b=\max\{a,b\}$),  is a lineale.

Let $U$ and $X$ be nonempty sets. A binary fuzzy relation $R$ in $U$ and $X$ is a 
fuzzy subset of $U\times X$, or $U\times X\to {\mathrm{I}}$. The value of $R(u,x)$ 
is interpreted as the {\em degree} of membership of the ordered pair $(u,x)$ in $R$. 
Let us now define a category of fuzzy relations.
 
\begin{definition}
The category $\mathsf{Dial}_{\mathrm{I}}(\mathbf{Set})$ has  as objects triples 
$A = (U, X, \alpha)$, where $U$ and $X$ are sets and $\alpha$ is a map 
$U\times X\to\mathrm{I}$. Thus, each object is  a fuzzy relation.  A map from 
$A=(U,X,\alpha)$ to $B=(V,Y,\beta)$  is a pair of  $\mathbf{Set}$ maps $(f,g)$,  
$f:U\to V$, $g:Y\to X$ such that
\begin{displaymath}
\alpha(u, g(y)) \leq \beta(f(u),y),
\end{displaymath}
or in pictorial form:
\begin{diagram}
U\times Y                         & \rTo^{\mathrm{id}_{U}\times g}     & U\times X\\
\dTo^{f\times\mathrm{id}_{Y}}     & \ge                                & \dTo_{\alpha}\\
V\times Y                         & \rTo_{\beta}                       & \mathrm{I}
\end{diagram}
\end{definition}

Assume that $(f,g)$ and $(f',g')$ are the following arrows:
\begin{displaymath}
(U,X,\alpha)\overset{(f,g)}{\underset{}{\longrightarrow}}
(V,Y,\beta)\overset{(f',g')}{\underset{}{\longrightarrow}}(W,Z,\gamma).
\end{displaymath}
Then $(f,g)\circ(f',g')=(f\circ f',g'\circ g)$ such that
\begin{displaymath}
\alpha\Bigl(u,\bigl( g'\circ g\bigr)(z)\Bigr) \le \gamma\Bigl(\bigl(f\circ f'\bigr)(u),z\Bigr).
\end{displaymath}

Tensor products  and the internal-hom in $\mathsf{Dial}_{\mathrm{I}}(\mathbf{Set})$ are given as  in the Girard-variant of the  Dialectica construction~\cite{paiva89}.
Given objects $A = (U, X, \alpha)$ and  $B = (V, Y, \beta)$, the tensor product  $A\otimes B$ is
$(U\times V, X^V\times Y^U, \alpha\times\beta)$, where the $\alpha\times\beta$ is the relation that, using the lineale structure of $I$, takes the minimum of the membership degrees. The linear function-space or internal-hom is 
given by $A\to B= (V^U\times Y^X, U\times X, \alpha\to \beta)$, where again the relation $\alpha\to \beta$ is given by the implication in the lineale. With this structure we obtain:
\begin{theorem}
The category ${\sf Dial}_{{\mathrm{I}}}({\bf Sets})$ is a  monoidal closed category with products and coproducts.
\end{theorem}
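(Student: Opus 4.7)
The plan is to verify in turn the four pieces of structure asserted: the symmetric monoidal tensor, the internal-hom providing closure, and the categorical products and coproducts. Throughout, the strategy is to lift everything from $\mathbf{Set}$ coordinatewise on the underlying sets $U,V$, and to use the lineale structure $(\mathrm{I},\le,\wedge,1,\Rightarrow)$ to track the fuzzy relations.

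First, for the monoidal structure, I would take as unit object $I=(1,1,\kappa_1)$, where $1$ denotes a singleton and $\kappa_1$ is the constant map at the top element $1\in\mathrm{I}$. To extend $\otimes$ to morphisms, given $(f,g):A\to A'$ and $(f',g'):B\to B'$, the first coordinate is $f\times f'$ on $U\times V$, and the second is the obvious pairing on $X'^{V'}\times Y'^{U'}$ obtained by pre- and post-composition with $f,f',g,g'$. The inequality $(\alpha\times\beta)(u,v,\varphi,\psi)\le(\alpha'\times\beta')((f\times f')(u,v),\ldots)$ reduces to the two given inequalities together with monotonicity of $\wedge$. The associator, unitors and symmetry are inherited from the corresponding isomorphisms in $\mathbf{Set}$ on the first coordinate and their transposes on the second; the fuzzy-relation inequalities then follow because $\wedge$ is associative, commutative and unital. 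The coherence (pentagon and hexagon) diagrams commute already in $\mathbf{Set}$ on each coordinate, so they commute in $\mathsf{Dial}_{\mathrm{I}}(\mathbf{Set})$.

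Second, for closure, I would establish the adjunction $\mathrm{Hom}(A\otimes B,C)\cong\mathrm{Hom}(A,B\multimap C)$ by curry/uncurry on each coordinate. A map $A\otimes B\to C$ is a pair of $\mathbf{Set}$ maps $U\times V\to W$ and $Z\to X^V\times Y^U$ satisfying the pointwise inequality $\alpha(u,\pi_1(k(z))(v))\wedge\beta(v,\pi_2(k(z))(u))\le\gamma(h(u,v),z)$. Applying ordinary $\mathbf{Set}$ currying $W^{U\times V}\cong(W^V)^U$ and $(X^V\times Y^U)^Z\cong(X^V)^Z\times(Y^U)^Z\cong X^{V\times Z}\times Y^{U\times Z}$ moves this into the form of a map $A\to(B\multimap C)$, and the inequality side is transposed using precisely the lineale adjunction $a\wedge b\le c\iff a\le b\Rightarrow c$. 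This is the main step in principle, but it is entirely formal once the bookkeeping on function spaces is set up correctly.

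Third, for products and coproducts, I would set $A\& B=(U\times V,\, X+Y,\,\alpha\& \beta)$, where $(\alpha\& \beta)(u,v,w)$ equals $\alpha(u,x)$ if $w=\mathrm{in}_1(x)$ and $\beta(v,y)$ if $w=\mathrm{in}_2(y)$; the projections $(π_1,\mathrm{in}_1)$ and $(π_2,\mathrm{in}_2)$ satisfy the required inequality trivially, and pairing of $(f,g):C\to A$ with $(f',g'):C\to B$ yields $(\langle f,f'\rangle,[g,g'])$, unique by the universal property in $\mathbf{Set}$ applied on each coordinate. Coproducts are defined dually: $A\oplus B=(U+V,\, X\times Y,\,\alpha\oplus\beta)$ with $(\alpha\oplus\beta)(\mathrm{in}_1(u),x,y)=\alpha(u,x)$ and analogously on the second summand.

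The main obstacle I anticipate is purely notational: keeping straight the four function-space coordinates of the internal-hom and verifying that the adjunction transposition preserves the fuzzy inequalities on both sides. Once the currying bijections on the underlying sets are written out, the fuzzy inequality side is a direct consequence of the defining property of $\Rightarrow$ in the lineale, and everything else is formal transport of structure from $\mathbf{Set}$.
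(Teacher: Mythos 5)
Your proposal is correct and follows essentially the same route as the paper, which simply exhibits the tensor $(U\times V, X^V\times Y^U,\alpha\times\beta)$, the internal-hom, and the (co)products and defers the verifications to the Girard-variant Dialectica construction of de Paiva; you supply exactly those verifications, with the key step being the lineale adjunction $a\wedge b\le c\iff a\le b\Rightarrow c$ for closure. No substantive difference in approach.
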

Products and coproducts are given by $A\times B= (U\times V, X+ Y, \gamma)$ and 
$A\oplus B= (U+ V, X\times Y, \delta)$, where 
$\gamma:U\times V\times (X+Y)\to \mathrm{I}$ is the 
fuzzy relation that is defined as follows
\begin{displaymath}
\gamma\bigl((u,v),z\bigr) = \left\{\begin{array}{ll}
                              \alpha(u,x),& \text{if $z=(x,0)$}\\
                              \beta(v,y), & \text{if $z=(y,1)$}
                             \end{array}\right.
\end{displaymath}
Similarly for the coproduct $A\oplus B$.

\section{Fuzzy Topological Systems}
Let $A = (U, X, \alpha)$ be an object of $\mathsf{Dial}_{\mathrm{I}}(\mathbf{Set})$, 
where $X$ is a frame, that is, a poset $(X, \leq)$ where
\begin{enumerate}
\item every subset $S$ of $X$ has a join
\item every finite subset $S$ of $X$ has a meet
\item binary meets distribute over joins, if $Y$ is a subset of $X$:
\begin{displaymath}
x\wedge\bigvee Y= \bigvee\Bigl\{ x\wedge y: y\in Y\Bigr\}.
\end{displaymath}
\end{enumerate}
Given such a triple, we can view $A$ as a {\em fuzzy topological system}, that is, the fuzzy 
counterpart of Vickers's~\cite{vickers90} {\em topological systems}.

A {\em topological system} in Vicker's monograph\cite{vickers90} is a triple
$(U,\models,X)$, where $X$ is a frame whose elements are called {\em opens} and 
$U$ is a set whose elements are called {\em points}. Also,  the relation  $\models$ 
is a subset of $U\times X$, and when $u\models x$, we say that $u$ {\em satisfies} 
$x$. In addition, the following must hold
\begin{itemize}
\item if $S$ is a finite subset of $X$, then
\begin{displaymath}
u\models \bigwedge S \Longleftrightarrow u\models x\;\text{for all $x\in S$}.
\end{displaymath}
\item if $S$ is any subset of $X$, then 
\begin{displaymath}
u\models\bigvee S \Longleftrightarrow u\models x\;\text{for some $x\in S$}.
\end{displaymath}
\end{itemize}

Given two topological systems $(U,X)$ and $(V,Y)$, a map from $(U,X)$ to $(V,Y)$ 
consists of a  function $f:U\rightarrow V$ and a frame homomorphism $\phi:Y\rightarrow X$,
if $u\models \phi(y) \Leftrightarrow f(u)\models y$. Topological systems and 
continuous maps between them form a category, which we write as $\mathbf{TopSystems}$.

In order to fuzzify topological systems, we need to fuzzify the relation ``$\models$.''
However, the requirement imposed on the relation of satisfaction is too severe when dealing
with fuzzy structures. Indeed, in some reasonable categorical models of fuzzy structures 
(see, for example~\cite{barr91,syropoulos06}), the authors use a weaker condition where the e
quivalence operator is replaced by an implication operator. Thus we suggest that the corresponding condition for morphisms of  fuzzy topological systems should become 
$u\models \phi(y) \Rightarrow f(u)\models y$.   
\begin{definition}
A {\em fuzzy topological system} is a triple $(U,\alpha,X)$, where $U$ is a set, $X$ is a frame
and $\alpha:U\times X\rightarrow\mathrm{I}$ a binary fuzzy relation such that:
\begin{itemize}
\item[(i)] If $S$ is a finite subset of $X$, then
\begin{displaymath}
\alpha(u, \bigwedge S)\le\alpha(u,x)\;\text{for all $x\in S$}.
\end{displaymath}
\item[(ii)] If $S$ is any subset of $X$, then 
\begin{displaymath}
\alpha(u,\bigvee S)\le\alpha(u,x)\;\text{for some $x\in S$}.
\end{displaymath}
\item[(iii)] $\alpha(u,\top)=1$ and $\alpha(u,\bot)=0$ for all $u\in U$.
\end{itemize} 
\end{definition}
To see that fuzzy topological systems also form a category we need to show that given morphisms $(f, F):(U,X)\to (V,Y)$ and  $(g, G):(V,Y)\to (W,Z)$, the obvious composition  $(g\circ f, F\circ G):(U,X)\to (W,Z)$ is also a morphism of fuzzy topological systems. But we know $\mathsf{Dial}_{\mathrm{I}}(\mathbf{Set})$ is a category and conditions (i), (ii) and (iii) do not apply to morphisms. Identities are given by $(id_U,id_X):(U,X)\to (U,X)$.

The collection of objects of $\mathsf{Dial}_{\mathrm{I}}(\mathbf{Set})$ that are fuzzy topological
systems and the arrows between them, form the category $\mathbf{FTopSystems}$, which is a
 subcategory of $\mathsf{Dial}_{\mathrm{I}}(\mathbf{Set})$.

\begin{proposition}
Any topological system $(U,X)$ is a fuzzy topological system $(U,\iota,X)$, where
\begin{displaymath}
\iota(u,x)=\left\{\begin{array}{ll}
                   1, & \text{when $u\models x$}\\
                   0, & \text{otherwise}
                  \end{array}\right.
\end{displaymath}
\end{proposition}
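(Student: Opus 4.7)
The plan is to verify the three conditions (i), (ii), (iii) in the definition of a fuzzy topological system for the characteristic-style function $\iota$, by translating each statement of the ordinary topological system $(U,\models,X)$ into an inequality in $\mathrm{I}=\{0,1\}\subset[0,1]$. Since $\iota$ takes only the values $0$ and $1$, each inequality $\iota(u,a)\le \iota(u,b)$ is equivalent to the implication ``$u\models a$ implies $u\models b$''. This reduces every fuzzy condition to a logical implication about $\models$, which is exactly what the axioms of a topological system give us.

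For (i), I would argue by cases on the value of $\iota(u,\bigwedge S)$. If $\iota(u,\bigwedge S)=0$, the inequality $\iota(u,\bigwedge S)\le \iota(u,x)$ is immediate. If $\iota(u,\bigwedge S)=1$, then $u\models \bigwedge S$, and Vickers's finite meet axiom gives $u\models x$ for all $x\in S$, so $\iota(u,x)=1$. For (ii), I would proceed similarly: if $\iota(u,\bigvee S)=1$, then $u\models\bigvee S$ and by Vickers's join axiom there is some $x\in S$ with $u\models x$, hence $\iota(u,x)=1\ge \iota(u,\bigvee S)$. If $\iota(u,\bigvee S)=0$, the inequality holds for any $x\in S$ (assuming $S$ is nonempty; the degenerate case $S=\emptyset$ is absorbed by condition (iii) since $\bigvee\emptyset=\bot$). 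For (iii), note $\top=\bigwedge\emptyset$, so the finite-meet condition is vacuously satisfied and $u\models\top$ for every $u$, giving $\iota(u,\top)=1$; dually $\bot=\bigvee\emptyset$ and the join condition forces $u\not\models\bot$, so $\iota(u,\bot)=0$.

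There is no real obstacle in this proof; the only subtle point worth mentioning is the treatment of the empty subsets, where the biconditionals of Vickers's definition collapse to the boundary values of $\top$ and $\bot$. Once these edge cases are handled by condition (iii), the remaining work is a direct case analysis on whether $u\models a$ or not, translating each satisfaction fact of the topological system into the corresponding $0/1$ inequality. Thus $(U,\iota,X)$ satisfies (i)--(iii) and is a fuzzy topological system in the sense of the definition given above.
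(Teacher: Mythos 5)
Your proof is correct and follows essentially the same route as the paper: translating Vickers's biconditional axioms for $\models$ into the corresponding $0/1$ inequalities for $\iota$. You are in fact slightly more thorough than the paper's own argument, which only sketches conditions (i) and (ii) and is silent on (iii); your treatment of $\top=\bigwedge\emptyset$ and $\bot=\bigvee\emptyset$ fills that gap cleanly.
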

\begin{proof}
Consider the first property of the relation ``$\models$''
\begin{displaymath}
u\models \bigwedge S \Longleftrightarrow u\models x\;\text{for all $x\in S$}.
\end{displaymath}
This will be translated to 
\begin{displaymath}
\iota(u,\bigwedge S)\le \iota(u, x)\;\text{for all $x\in S$}.
\end{displaymath}
The inequality is in fact an equality since whenever $u\models x$, $\iota(u,x)=1$. Therefore,
we can transform this condition into the following one
\begin{displaymath}
\iota(u,\bigwedge S) = \iota(u, x)\;\text{for all $x\in S$}.
\end{displaymath}
A similar argument holds true for the second property.
\end{proof}
The following result is based on the previous one: 
\begin{theorem}
The category of topological systems is a full subcategory of 
$\mathsf{Dial}_{\mathrm{I}}(\mathbf{Set})$.
\end{theorem}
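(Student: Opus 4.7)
The plan is to define an inclusion functor $\Phi:\mathbf{TopSystems}\to\mathsf{Dial}_{\mathrm{I}}(\mathbf{Set})$ and verify, in order, that it is well-defined on objects, well-defined and faithful on morphisms, and finally full. On objects, $\Phi$ will send a topological system $(U,\models,X)$ to the triple $(U,X,\iota)$ provided by the preceding proposition; on morphisms, it will send a Vickers arrow $(f,\phi):(U,X)\to(V,Y)$ to the pair $(f,\phi)$ interpreted as an arrow of $\mathsf{Dial}_{\mathrm{I}}(\mathbf{Set})$ between the images.

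The object part is essentially settled by the preceding proposition, and injectivity on objects is transparent because $\models$ can be recovered from $\iota$ as $\iota^{-1}(1)$. For well-definedness on morphisms, the required Dialectica inequality $\iota(u,\phi(y))\le\iota'(f(u),y)$ takes values in $\{0,1\}$ on both sides and therefore reduces to the implication $u\models\phi(y)\Rightarrow f(u)\models y$, which is one half of the topological-system morphism condition and is in particular implied by the Vickers biconditional. Preservation of identities and of composition is immediate, since both categories compose component-wise in the same way, and faithfulness is trivial because $\Phi$ passes the underlying pair $(f,\phi)$ through unchanged.

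The part I expect to need the most care is fullness. Given a Dialectica arrow $(f,g):(U,X,\iota_1)\to(V,Y,\iota_2)$ between images of topological systems, the defining inequality collapses, again by $\{0,1\}$-valuedness, to the implication $u\models_1 g(y)\Rightarrow f(u)\models_2 y$; moreover $g$ is only supplied as a set map rather than as a frame homomorphism. Reading morphisms of $\mathbf{TopSystems}$ in the weakened, implicational sense already motivated in this section when $\Leftrightarrow$ was replaced by $\Rightarrow$ for the fuzzy case, and dropping the a~priori frame-homomorphism constraint on $g$ beyond what the Dialectica inequality itself supplies, this pair is exactly a morphism of topological systems, so fullness holds by construction. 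The substantive content of the theorem is then that no information is lost when a topological system is encoded as the Dialectica object $(U,X,\iota)$: the frame structure on $X$ together with the characteristic function $\iota$ recover both the system and the morphisms between systems.
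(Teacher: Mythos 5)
Your functor is the intended one, and since the paper offers no argument for this theorem beyond the remark that it ``is based on the previous one,'' your write-up is strictly more detailed than the source. The object part, the reduction of the Dialectica inequality between $\{0,1\}$-valued relations to the implication $u\models\phi(y)\Rightarrow f(u)\models y$, and faithfulness are all correct and are exactly what the proposition preceding the theorem supplies.

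The one substantive point is fullness, and you have correctly located it but then resolved it by fiat. With $\mathbf{TopSystems}$ as it is actually defined earlier in the paper (Vickers's continuous maps: $\phi$ a \emph{frame homomorphism} with $u\models\phi(y)\Leftrightarrow f(u)\models y$), the embedding is faithful but \emph{not} full. Concretely, take $U=V=\{\ast\}$, $X=Y=\{\bot,\top\}$ with $\ast\models\top$ only, and let $g$ send both $\top$ and $\bot$ to $\bot$; then $\iota(\ast,g(y))=0$ for all $y$, so $(\mathrm{id},g)$ satisfies the Dialectica inequality vacuously, yet $g$ preserves neither $\top$ nor the converse implication, so it is not a Vickers morphism. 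Hence the theorem is true only if ``the category of topological systems'' is re-read as the full subcategory of $\mathsf{Dial}_{\mathrm{I}}(\mathbf{Set})$ on those objects, i.e.\ with implicational morphisms and no frame-homomorphism constraint --- which is exactly the move you make. That is almost certainly the authors' intent (it mirrors the replacement of $\Leftrightarrow$ by $\Rightarrow$ they advocate for the fuzzy case), but it should be announced as a modification of the category being embedded, not folded into the proof: as written, ``fullness holds by construction'' conceals the fact that the statement fails for the category named $\mathbf{TopSystems}$ a few paragraphs earlier, and a reader deserves to be told explicitly which of the two categories the theorem is about.
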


Obviously, it is not enough to provide generalization of structures---one needs to
demonstrate that these new structures have some usefulness. The following example
gives an interpretation of these structures in a ``real-life'' situation.
\begin{example}
Vickers~\cite[p.~53]{vickers90} gives an interesting physical interpretation of 
topological systems. In particular, he considers the set $U$ to be a set of programs 
that generate bit streams and the opens to be assertions about bit streams. For exanple, 
if $u$ is a program that generates the infinite bit stream 010101010101\ldots and 
``\textbf{starts} 01010'' is an assertion that is satified if a bit stream starts 
with the digits ``01010'', then this is expressed as follows: 
\begin{displaymath}
x\models \text{\textbf{starts} 01010}.
\end{displaymath}
Assume now that $x'$ is a program that produces bit streams that look like the following one
\begin{center}
\includegraphics[scale=1]{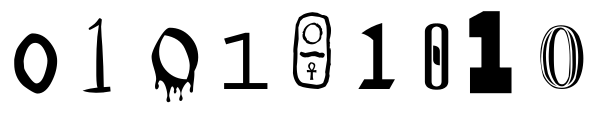}
\end{center}
The individuals bits are not identical to either ``1'' or ``0,'' but rather similar 
to these. One can speculate that these bits are the result of some interaction of 
$x'$ with its environment and this is the reason they are not identical. Then, we can 
say that $x'$ satisfies the assertion ``\textbf{starts} 01010'' to some degree, since 
the elements that make up the stream produced by $x'$ are not identical, but rather similar.
\end{example}
\section{From Fuzzy Topological Systems to Fuzzy Topological Spaces}

It is not difficult to map fuzzy topological systems to fuzzy topological spaces 
(for an overview of the theory of fuzzy topologies see~\cite{ying-ming97}). 
The following definition shows how to map an open to fuzzy set:
\begin{definition}
Assume that $a\in A$, where $(U,\alpha,X)$ is a fuzzy topological space. Then the 
{\em extent} of an open $x$ is a function whose graph is given below:
\begin{displaymath}
\Bigl\{ \bigl(u,\alpha(u,x) \bigr):u\in U\Bigr\}.
\end{displaymath}
\end{definition}
\begin{proposition}
The collection of all fuzzy sets created by the extents of the members of $A$ 
correspond to a fuzzy topology on $X$. 
\end{proposition}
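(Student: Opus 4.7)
The plan is to exhibit the candidate fuzzy topology as the family of extents themselves, viewed as fuzzy subsets of $U$, and to verify the three defining axioms of a fuzzy topology: inclusion of the constants $\overline{0}$ and $\overline{1}$, closure under finite pointwise meets, and closure under arbitrary pointwise joins. Concretely, I would set $\tau_\alpha := \{\widehat{x} : x\in X\}\subseteq \mathrm{I}^U$, where $\widehat{x}(u) := \alpha(u,x)$ is the extent of $x$ as in the preceding definition. (Note: the extents live in $\mathrm{I}^U$, so the fuzzy topology is naturally one on $U$ rather than on $X$.)

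First, axiom (iii) of a fuzzy topological system immediately yields $\widehat{\top}\equiv 1$ and $\widehat{\bot}\equiv 0$, placing both constant fuzzy sets in $\tau_\alpha$. For closure under finite meets, given $x_1,\dots,x_n\in X$, the natural witness in $\tau_\alpha$ for the pointwise meet of the $\widehat{x_i}$ is the extent of the frame meet $x = \bigwedge_i x_i$. Axiom (i) directly gives $\widehat{x}(u)=\alpha(u,x)\le\alpha(u,x_i)=\widehat{x_i}(u)$ for every $i$, hence $\widehat{x}\le\bigwedge_i\widehat{x_i}$ pointwise. The case of arbitrary joins is entirely parallel, using axiom (ii) with candidate $\bigvee S$.

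The main obstacle is that axioms (i) and (ii) supply only one side of the required equality, while membership of $\bigwedge_i\widehat{x_i}$ or $\bigvee_{y\in S}\widehat{y}$ in $\tau_\alpha$ demands that the pointwise meet or join of extents actually coincide with some extent. I expect to bridge this gap by first deriving monotonicity of $\alpha(u,-)$ in its frame argument from axiom (i) (apply it to a two-element set $\{x,y\}$ with $x\le y$, so that $x\wedge y = x$, giving $\alpha(u,x)\le\alpha(u,y)$), and then combining this monotonicity with the frame distributivity law to obtain the reverse inequalities. Should this direct attack prove too optimistic, a cleaner fallback---matching the paper's philosophy of weakening $\Leftrightarrow$ to $\Rightarrow$ when passing to the fuzzy setting---is to reinterpret $\tau_\alpha$ as the fuzzy topology \emph{generated by} the extent family under pointwise finite meets and arbitrary joins, which is then automatically a fuzzy topology on $U$ in which the extents form a subbase.
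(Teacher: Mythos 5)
Your approach is the same as the paper's---take the family of extents and check the fuzzy-topology axioms directly from (i)--(iii)---but you are more careful than the paper at exactly the points where care is needed. Two of your observations deserve confirmation. First, you are right that the extents are fuzzy subsets of $U$, so the resulting fuzzy topology lives on $U$ (the set of points), not on $X$ as the statement literally says; this matches Vickers's crisp construction. Second, the one-sidedness you worry about is real, and the paper's own proof does not escape it: for binary meets it derives only $\psi(u)\le\min\{\mathbf{a}(u),\mathbf{b}(u)\}$ from axiom (i) and then simply asserts $\psi=\mathbf{a}\cap\mathbf{b}$. Your proposed bridge behaves differently in the two cases. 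For arbitrary joins it works: axiom (i) applied to $\{x,y\}$ with $x\le y$ gives monotonicity of $\alpha(u,-)$, hence $\alpha(u,x)\le\alpha(u,\bigvee S)$ for every $x\in S$, and combined with axiom (ii) this yields $\alpha(u,\bigvee S)=\sup_{x\in S}\alpha(u,x)$, so the extent family really is closed under arbitrary pointwise sups; this is essentially the argument the paper gestures at when it invokes $\phi(x)\ge\mathbf{a}_i(x)$. For finite meets, however, monotonicity only reproduces the inequality $\alpha(u,x\wedge y)\le\min\{\alpha(u,x),\alpha(u,y)\}$ that you already have, and frame distributivity in $X$ gives you nothing about the $\mathrm{I}$-valued relation; the axioms as stated genuinely do not force the reverse inequality, so the extent family need not be closed under pointwise min. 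Your fallback---taking the fuzzy topology generated by the extents, with the extents as a subbase---is therefore the honest repair; the alternative is to strengthen axiom (i) to an equality, restoring Vickers's biconditional for finite meets, which is what the paper's proof implicitly assumes.
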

\begin{proof}
Assume that $a$ and $b$ are opens and let $\mathbf{a}(x)=\alpha(x,a)$,
$\mathbf{b}(x)=\alpha(x,b)$, and $\psi(x)=\alpha(x,a\wedge b)$. Then 
$\alpha(x,a\wedge b)\le \alpha(x,a)$ and $\alpha(x,a\wedge b)\le \alpha(x,b)$. In
different words,  $\psi(x)\le\mathbf{a}(x)$ and $\psi(x)\le\mathbf{b}(x)$, which implies
that $\psi(x)\le\min\{\mathbf{a}(x),\mathbf{b}(x)\}$ that is $\psi=a\cap b$. 
Similarly, assume that $\{a_i\}$ is a collection of opens such that $\mathbf{a}_{i}(x)=\alpha(x,a_i)$
and $\phi(x)=\alpha(x,\bigvee_{i}a_i)$. The fact that there is one $\phi(x)\le\mathbf{a}_{j}(x)$,
while for all other $\mathbf{a}_{i}$ it holds that $\phi(x)\ge\mathbf{a}_{j}(x)$, implies that
$\phi(x)=\sup_{i}\mathbf{a}_{i}(x)$, that is, $\phi=\bigcup_{i}\mathbf{a}_i(x)$. Finally, the
last conditions generate the sets $\mathbf{1}(x)=1$ and $\mathbf{0}(x)=0$. So, the
opens form a fuzzy topology on $X$.
\end{proof}

\section{Products and Sums of Fuzzy Topological Systems}
In section~\ref{basic:notions} we described the categorical products and coproducts
of any two objects of $\mathsf{Dial}_{\mathrm{I}}(\mathbf{Set})$. Given two fuzzy 
topological systems $A=(U, X, \alpha)$ and $B= (V, Y, \beta)$, their topological
product is the space $A\times B= (U\times V, X+ Y, \gamma)$. Since $X$ and $Y$ are frames
it is necessary to modify the definition of $X+Y$ and, consequently, the definition of
$\gamma$. 
\begin{definition}
Assume that $A=(U, X, \alpha)$ and $B= (V, Y, \beta)$ are two fuzzy topological
systems. Then their topological product $A\times B$ is the system
$(U\times V,\gamma,X\otimes Y)$, where $X\otimes Y$ is the tensor product
of the two frames $X$ and $Y$ (see~\cite[pp.~80--85]{vickers90} for details)
and $\gamma$ is defined as follows:
\begin{displaymath}
\gamma\bigl((u,v),\bigvee_{i} x_{i}\otimes y_{i}\bigr)=
\max\bigl\{\alpha(u,x),\beta(v,y)\bigr\}.
\end{displaymath}
\end{definition}
Obviously, the topological product is not the same as the categorical product. 
The topological sum is more straigthtforward:
\begin{definition}
Assume that $A=(U, X, \alpha)$ and $B= (V, Y, \beta)$ are two fuzzy topological
systems. Then their topological sum $A + B$ is the system
$(U+V,\gamma,X\times Y)$, where $\gamma$ is defined as follows:
\begin{displaymath}
\gamma\bigl(z,(x,y)\bigr)=\left\{\begin{array}{ll}
                                 \alpha(u,x),& \text{if $z=(u,0)$}\\
                                 \beta(v,y), & \text{if $z=(v,1)$}
                                 \end{array}
                           \right.
\end{displaymath}
\end{definition}
Comparing the topological sum with the categorical sum reveals that they are identical.
\section{Conclusions}
We have simply started thinking about the possibilities of using
Dialectica-like models in the context of fuzzy topological
structures. Much remains to be done, in particular we would like to
see if a framework based on an implicational notion of morphism like
ours can cope with embedding several of the other notions of fuzzy
sets considered by Rodabaugh~\cite{rodabaugh1983}. Also seems likely 
that we could extend the work of Solovyov~\cite{solovyov2010} on 
variable-basis topological spaces using similar ideas. 

\end{document}